\newtheorem{theorem}{Theorem}[section]
\newtheorem{lemma}[theorem]{Lemma}
\newtheorem{coro}[theorem]{Corollary}
\setlist{itemsep=0pt, topsep=3pt}
\newcommand\set[1]{{\left\{#1\right\}}}
\newcommand{\poly}{{\mathrm{poly}}}
\newcommand{\calA}{{\mathcal{A}}}
\newcommand{\calB}{{\mathcal{B}}}
\newcommand{\calP}{{\mathcal{P}}}
\newcommand{\Real}{{\mathbb{R}}}
\DeclareMathOperator{\union}{\bigcup}
\title{Polynomial Integrality Gap of Flow LP for Directed Steiner Tree \footnote{A preliminary version of the paper appeared in the Proceedings of ACM-SIAM Symposium on Discrete Algorithms (SODA 2022). This version gives an explicit constant using approximations of binomial coefficients. This is better than the implicit one in the preliminary version obtained using Chernoff bounds.}}
\author{
	Shi Li\thanks{
		Department of Computer Science and Engineer, University at Buffalo, USA. \href{mailto:shil@buffalo.edu}{shil@buffalo.edu}.
		The work is supported in part by NSF grant CCF-1844890.
	}
	\and
	Bundit Laekhanukit\thanks{
		Institute for Theoretical Computer Science,
		Department of Information Management and Engineering,
		Shanghai University of Finance and Economics,  China. \href{mailto:lbundit+sufe@gmail.com}{lbundit+sufe@gmail.com}. This work is supported by Science and Technology Innovation 2030 –“New Generation of Artificial Intelligence” Major Project No.(2018AAA0100903), NSFC grant 61932002, Program for Innovative Research Team of Shanghai University of Finance and Economics (IRTSHUFE) and the Fundamental Research Funds for the Central Universities. 
		It is also supported by the national 1000-talent award by the Chinese government.
	}
}
\date{}
\begin{document}

	\maketitle
	
	%\fancyfoot[R]{\scriptsize{Copyright \textcopyright\ 2021 by SIAM\\ Unauthorized reproduction of this article is prohibited}}
	
	\begin{abstract}\small\baselineskip=9pt
		In the Directed Steiner Tree (DST) problem, we are given a directed graph $G=(V,E)$ on $n$ vertices with edge-costs $c \in \Real_{\geq 0}^E$, a root vertex $r \in V$, and a set $K \subseteq V \setminus \{r\}$ of $k$ terminals. The goal is to find a minimum-cost subgraph of $G$ that contains a path from $r$ to every terminal $t \in K$. DST has been a notorious problem for decades as there is a large gap between the best-known polynomial-time approximation ratio of $O(k^\epsilon)$ for any constant $\epsilon > 0$,  and the best quasi-polynomial-time approximation ratio of $O\left(\frac{\log^2 k}{\log \log k}\right)$.  
		
		Towards understanding this gap, we study the integrality gap of the standard flow LP relaxation for the problem. We show that the LP has an integrality gap of $\Omega(n^{0.0418})$. Previously, the integrality gap of the LP is only known to be $\Omega\left(\frac{\log^2n}{\log\log n}\right)$ [Halperin~et~al., SODA'03 \& SIAM J.~Comput.] and $\Omega(\sqrt{k})$ [Zosin-Khuller, SODA'02] in some instance with $\sqrt{k}=O\left(\frac{\log n}{\log \log n}\right)$. Our result gives the first known lower bound on the integrality gap of this standard LP that is polynomial in $n$, the number of vertices. Consequently, we rule out the possibility of developing a poly-logarithmic approximation algorithm for the problem based on the flow LP relaxation.
	\end{abstract}

	\section{Introduction}
	
	Network design problems play an important role in the area of combinatorial optimization both in theory and practice. 
	%The first and arguably most famous problem in the area is the {\em minimum spanning tree} problem (MST), which asks to find a minimum-cost subgraph spanning all the vertices. It has been a fundamental topic to study in the area of algorithm design. The analog of MST on directed graphs, also known as the {\em minimum-cost arborescence} problem also plays a huge part in illustrating the powerful concept of greedy algorithms in standard textbooks.
	%
	Two most famous problems in the area are the {\em Minimum Spanning Tree} problem, and its analogue in directed graphs, the {\em Minimum-Cost Arborescence} problem. They play a fundamental role in illustrating the powerful concept of greedy algorithms in standard textbooks.
	While the two textbook problems ask to find a tree or an arborescence that spans all the vertices, in most applications, it is more natural to connect only a subset of vertices, called {\em terminals}, which models clients in the network, while using some non-terminals or less-important nodes, called {\em Steiner vertices}, as relay. 
	This motivates the {\em Steiner tree} problems on both undirected and directed graphs, which have become the central focus in the area of network design for several decades. 
	Since the problems are known to be NP-hard, %and thus a clean greedy algorithm as in the case of minimum spanning tree is unlikely to exist.
	the algorithmic development has mainly been on finding good approximation algorithms. % rather than devising a fast algorithm as in the case of the minimum spanning tree problem. 

	In contrast to the Minimum Steiner Tree problem (on undirected graphs), the complexity status of the {\em Directed Steiner Tree} (DST) problem is much less-understood.  Formally, in the Directed Steiner Tree problem, we are given a directed graph $G = (V, E)$ on $n = |V|$ vertices, with edge costs $c \in \Real_{\geq 0}^E$, a root $r \in V$ and a set $K \subseteq V\setminus \{r\}$ of $k = |K|$ terminals. The goal of the problem is to find a minimum cost subgraph $T$ of $G$ that contains a path from $r$ to $t$, for every terminal $t \in K$. By minimality, we may assume WLOG that the solution subgraph $T$ is a tree (more precisely, an out-arborescence) rooted at $r$.

	 The best known polynomial-time approximation ratio for the problem is only $O(k^\epsilon)$ with a running time of $n^{O(1/\epsilon)}$ for any constant $\epsilon > 0$ \cite{Zelikovsky97, CharikarCCDGGL99}. %, where $n$ is the number of vertices in the graph, and $k$ is the number of terminals we need to connect. 
	 With quasi-polynomial time algorithms, one can achieve an approximation ratio of $O\left(\frac{\log ^2k}{\log \log k}\right)$ \cite{GrandoniLL-STOC19,GhugeN20}. It is known from the work of Halperin and Krauthgamer \cite{HalperinK03} that unless $\text{NP} \subseteq \text{ZPTIME}(n^{\poly\log(n)})$, there is no polynomial time  $O(\log^{2-\epsilon}k)$-approximation for the problem for any constant $\epsilon > 0$.  The long-standing open question on DST is whether a poly-logarithmic approximation for the problem can be achieved in polynomial time.

%	The polylogarithmic approximation algorithms for the problem is known only for those that run in quasi-polynomial running-time, while polynomial-factor is the best-known for polynomial-time algorithms. 
	%It has been a long standing open problem whether sub-polynomial approximation ratio is possible to obtained within polynomial running-time. 
	%Nevertheless, to the best of our knowledge, all the attempts to design a polynomial-time sub-polynomial approximation algorithms for DST have been failed. 
	%
	%
	In this paper, we study the integrality gap of the standard flow-based relaxation of DST, in the hope to shed some light on the approximability of DST under polynomial-time algorithms. 
	%
	%To be more precise, in the {\em directed Steiner tree} problem, we are given a directed graph $G=(V,E)$ on $n$ vertices with costs on edges, a root vertex $r$, and a set of $k$ terminals $T\subseteq V$, and the goal is to find a minimum-cost subgraph $H\subseteq G$ that has a directed path from the root vertex $r$ to every terminal $T$ (which we can assert such a solution subgraph to be arborescence).
	%
	%It is known that DST admits a polynomial-time $k^{\epsilon}$-approximation algorithm, for every constant $\epsilon>0$ \cite{CharikarCCDGGL99} using either recursive greedy algorithms or LP-based approximation algorithms through LP (resp, SDP) hierarchies\cite{Rothvoss11,FriggstadKKLST14}.
	%Although LP-based $k^{\epsilon}$-approximation algorithms are known, the standard LP relaxation of the problem has an integrality gap of at least $\sqrt{k}$ , and the LP-hierarchies technique incurred a running time of $n^{1/\epsilon}$. 
	The interesting part on the flow LP is that while its integrality gap is known to be lower bounded by $\Omega(\sqrt{k})$ \cite{ZosinK02}, the parameter $k$ (i.e., the number of terminals) in the construction is only $O\left(\frac{\log^2 n}{\log^2\log n}\right)$. Thus, the lower bound given by the instance of \cite{ZosinK02} is only $\Omega\left(\frac{\log n}{\log \log n}\right)$. Indeed,  the hardness result of \cite{HalperinKKSW07} implies an  integrality gap lower bound of  $\Omega\left(\frac{\log^2 n}{\log\log n}\right)$, which is better than that of \cite{ZosinK02} when we concern the dependence on $n$.
	
	%In our study, we attempt to prove the integrality ratio in terms of $n$, the number of vertices in the graph. At first, we believe in the integrality of polylogarithmic on $n$, which would give a polylogarithmic {\em estimation algorithm} for DST. 
	% Nevertheless, the answer to this question turns out negatively, and we discover a construction with polynomial lower bound. The construction itself also resemblances to that in the approximation hardness of the {\em Densest $k$-Subgraph} (DkS) problem, which has a similar property: There exists a polylogarithmic approximation algorithm that runs in quasi-polynomial-time, but there is no sub-polynomial approximation algorithm with polynomial running time unless the {\em Gap Exponential-Time Hypothesis} (Gap-ETH) is false \cite{Manurangsi17}.

\subsection{Our Results}
In this paper, we show that the integrality gap of the flow LP is lower bounded by $\Omega(n^{0.0418})$, a polynomial function of $n$.
%Here we state our result and contributions. Firstly, we show that an integrality gap of the standard flow-based LP relaxation for DST is polynomial on $n$. 
Our construction resemblances the previous one by Zosin and Khuller \cite{ZosinK02}, in the sense that our instance also has a 5-level structure.  In the instance of \cite{ZosinK02},  each vertex in the graph corresponds to some subset of $K$ (the set of terminals) of size roughly $\sqrt{k}$. This leads to an exponential dependence of $n$ on $k$.  In our construction, both the terminals and Steiner vertices correspond to subsets of some ground set, so $n$ and $k$ are polynomially related.  Crucially, we can still prove the useful properties that are needed to show the integrality gap.  The formal statement of our result is stated in Theorem~\ref{thm:main}, after we formally defined the flow LP relaxation \eqref{FLP}. As a consequence of our result, we rule out poly-logarithmic approximation algorithms for DST based on the flow LP.

 %The improvement on the integrality ratio can be cast as a gap amplification under ``parallel repetition''. This is a rare construction as most lower bound strengthening technique in terms of integrality ratio generally uses a recursive composition, which can be cast as a ``sequential repetition''. 
%As we go through an uncommon path, our analysis is quite involved, and we have to deal with a threshold argument. 

%Our result can be stated as follows.

%\begin{theorem} \label{thm:main}
%	Consider the flow-based LP relaxation of the directed Steiner tree problem as in \Cref{fig:flow-lp}. There exists an infinite family of instances whose integrality ratio is at least XXXX.
%\end{theorem}
%

%In other words, it is not possible to exploit the different on $k$ and $n$ to get a substantially better approximation. 
%
%In addition, our construction suggests that DST may have a behavior similar to DkS, although at the current state of the work, we cannot draw the explicit connections. Shortly, it suggests that a polynomial-time sub-polynomial approximation for DST is unlikely to exists.

\subsection{Related Work}
\label{sec:related-work}
	Compared to DST, the Minimum Steiner Tree problem (in undirected graphs) is better-understood.  Simply computing the minimum spanning tree over the metric closure of the terminals leads to a 2-approximation algorithm for the problem. %via MST heuristic has became a textbook standard, 
	The current best approximation ratio for the problem is $\ln(4) + \epsilon < 1.39$ due to Byrka et~al.\ \cite{BGR10}. On the negative side, the problem is known to be APX-hard \cite{BP89, CC08}. 

The DST problem has been a subject of studies for decades. The first non-trivial approximation result on this problem is due to Zelikovsky \cite{Zelikovsky97}, which gives an $O(k^\epsilon)$-approximation in $n^{O(1/\epsilon)}$-time for any constant $\epsilon > 0$, which is designed for directed acyclic graphs. The result is later extended in \cite{CharikarCCDGGL99} to general graphs. % thus giving an $O\left(\frac{k^{\epsilon} \log k}{\epsilon^2}\right)$ approximation algorithm that runs in $O(n^{1/\epsilon})$ time. 
Moreover, the $\epsilon$ in \cite{CharikarCCDGGL99} can be set to $1/\log k$, leading to an $O(\log^3k)$-approximation algorithm for DST in $n^{O(\log k)}$ time; this is the first poly-logarithmic approximation algorithm for the problem in quasi-polynomial time. A similar result was obtained by Kortsarz and Peleg \cite{KortsarzP99}. The approximation ratio in the quasi-polynomial-time regime has been improved to $O\left(\frac{\log^2k}{\log \log k}\right)$ in \cite{GrandoniLL-STOC19,GhugeN20}. Grandoni et~al.\ \cite{GrandoniLL-STOC19} also showed that this is the best approximation guarantee for any quasi-polynomial-time algorithm unless $\mathrm{NP}\subseteq\bigcup_{c>0}\mathrm{BPTIME}(2^{n^c})$ or the {\em Projection Game Conjecture} \cite{Moshkovitz15} is false.

There have been quite a few studies devoted to understand the power of LP/SDP hierarchy for the problem. %It was shown in \cite{ZosinK02} that the integrality gap of DST is at least $\Omega(\sqrt{k})$ even on a {\em 5-layered graph}. However in the instance that gives the lower bound, we have $k = O(\sqrt{\log n})$. Thus the lower bound is not polynomial in $n$, the size of the instance. 
It was shown in \cite{Rothvoss11} and \cite{FriggstadKKLST14} that the integrality gap of some basic LP relaxations can be brought down to $O(L\log^2 k)$ for DST on {\em $L$-level} graphs, if we lift them by $O(L)$ levels using some well-known LP/SDP hierarchies such as Sherali-Adams,  Lov\'asz–Schrijver and Lasserre hierarchies.  The quasi-polynomial time $O\left(\frac{\log^2k}{\log \log k}\right)$-approximation of \cite{GrandoniLL-STOC19} is also based on the Sherali-Adams hierarchy. 
%
%The undirected counter part of DST (abbreviatedly, ST) is also well-studied and stays in among the top profile problems.. The first known (folklore) approximation algorithm for this problem gives an approximation ratio, which has been a huge barrier at the time until the breakthrough result of Zelikovsky in \cite{Zelikovsky93} followed by subsequent works, e.g., \cite{RobinsZ05,XXX}. To date, the best approximation ratio for the undirected Steiner tree problem is XXX due to \cite{XXX}, which is an LP-based algorithm. Most of the modern works on ST are dedicated on the LP-based algorithms. Many LP formulation have been formulated to tighten the LP, and some of which have been later shown to be equivalent XXX.

%\paragraph*{Organizations.} 
%Prelim in XXX. Zosin-Khuller in XXX. New Construction in XXX. More Discussions in XXX.

\section{Flow LP Relaxation for Directed Steiner Tree}
\label{sec:prelim}

The standard flow LP relaxation is given in \eqref{FLP}. In the correspondent integer program, for every $e \in E$, we have a variable $x_e$ indicating whether $e$ is in the output tree $T$.  For every terminal $t \in K$, we let $\calP_t$ be the set of simple paths in $G$ from $r$ to $t$. Then for every $t \in K, P \in \calP_t$, we have a variable $f^t_P$ indicating if the $r\rightarrow t$ path in $T$ is $P$.  In the LP, we relax the 0/1-constraints on variables to non-negativity constraints \eqref{LPC:non-negative} and \eqref{LPC:non-negative-1}.

\noindent
\begin{minipage}[t]{\columnwidth}
	\begin{align}
		\min && \sum_{e\in E}c_ex_e \tag{\text{FLP}} \label{FLP}\\
		\text{s.t.} \nonumber\\
			 && \sum_{P\in\calP_t}f^t_P &= 1
		       & \forall t\in K \label{LPC:connected}\\
		     && \sum_{P\in\calP_t: e\in P}f^t_P &\leq x_e
		       & \forall e\in E, t\in K\label{LPC:capacity}\\
			 && x_e &\geq 0 
			   & \forall e \in E \label{LPC:non-negative}\\
			&& f^t_P & \geq 0 
			   &\forall t \in K, P \in \calP_t \label{LPC:non-negative-1}
	\end{align}
\end{minipage}
\bigskip

\Cref{LPC:connected} says that exactly one $r\to t$ path should exist in $T$, for every $t \in K$.
\Cref{LPC:capacity} says that if the $r\to t$ path $P$ in the output tree $T$ uses an edge $e$, then $e$ must be in $T$.  So, the constraints are valid.    In the LP, we can view $f^t_P$ as the amount of flow sending from $r$ to $t$ using a path $P$. Then \Cref{LPC:connected} and \Cref{LPC:capacity} together are equivalent to saying that the maximum flow from $r$ to $t$ in the graph $G$ with capacities $(x_e)_{e \in E}$ has value at least $1$, for every terminal $t \in K$. For this reason, we call \eqref{FLP} the flow LP. It can be solved efficiently despite the fact that it has exponential number of variables since the requirement for every $t$ is just a maximum-flow problem, which is equivalent to a polynomial-sized LP. 

Zosin and Khuller \cite{ZosinK02} showed that \eqref{FLP} has an integrality gap of $\Omega(\sqrt{k})$, for a family of instances with $\sqrt{k} = O\left(\frac{\log n}{\log \log n}\right)$. Therefore, the dependence of the integrality gap on $k$ is only logarithmic. In this paper, we show the following theorem:
\begin{theorem}
	\label{thm:main}
	The integrality gap of \eqref{FLP} is $\Omega(n^{0.0418})$.
\end{theorem}

\paragraph*{Organization.} The remainder of this paper is organized as follows. We first introduce the 5-level instance of Zosin and Khuller \cite{ZosinK02} in Section~\ref{sec:ZK}, where we give some necessary properties to prove an integrality gap. We give our construction satisfying the properties in Section~\ref{sec:proof}, which finishes the proof of Theorem~\ref{thm:main}.

	\section{Zosin-Khuller Type Gap Instance} \label{sec:ZK}
		In this section, we define the Zosin-Khuller type gap instance for the flow LP of the directed Steiner tree problem. To define such an instance, we need to specify the following objects. 
		\begin{enumerate}[label=(P\arabic*)]
			\item \label{objects-start}  $H = (\calA \uplus \calB, E_H)$ is a bipartite graph with $|\calA| \leq |\calB|$,  where all vertices in $\calA$ have the same degree $d$ and all vertices in $\calB$ have the same degree $d'$. 
			\item $K$ is a set of $k$ terminals, and $\{M_t\}_{t \in K}$  is a partition of $E_H$ into $k$ matchings of size $s$ each, one for each terminal $t \in K$. Therefore, we have $s\leq |\calA| \leq |\calB|, k \geq d \geq d'$ and $sk = d|\calA| = d'|\calB| = |E_H|$.  For convenience, we also treat each terminal $t \in K$ as a color and say all the edges in $M_t$ have color $t$.  So, the partition $\{M_t\}_{t \in K}$ gives a $k$-coloring of $E_H$. 
			\item \label{objects-end}  For every vertex $v \in \calB$, $K_v := \set{t \in K: v \text{ is matched in } M_t}$ is the set of colors of the incident edges of $v$. Notice that, for every $v \in \calB$, we have $|K_v| = d'$, and that $K_v$'s are determined by (P1) and (P2).
		\end{enumerate}
		
		Given the above objects, the Zosin-Khuller type gap instance is defined over a $5$-level directed graph $G = (V = V_0 \uplus V_1 \uplus V_2 \uplus V_3 \uplus V_4, E = E_1 \uplus E_2 \uplus E_3 \uplus E_4)$, where edges in $E_i$, $i \in \{1, 2, 3, 4\}$ go from $V_{i-1}$ to $V_i$.  The vertices and edges in $G$ are defined as follows. (See Figure~\ref{fig:layer} for an illustration of the instance.)
		\begin{itemize}
			\item $V_0 = \{r\}, V_1 = \calA, V_2 = \calB, V_3 = \calB'$ and $V_4 = K$, where $\calB'$ is a new set with the same cardinality as $\calB$. Let $\pi:\calB \to \calB'$ be a bijection from $\calB$ to $\calB'$. We say that the vertex $\pi(v) \in \calB'$ is the copy of the vertex $v \in \calB$, and thus, $\calB'$ is the copy of $\calB$.
			\item $E_1 = \{(r, u): u \in \calA\}$, $E_2  = E_H$ (except that edges in $E_2$ are directed), $E_3 = \set{(v, \pi(v)): v \in \calB}$, and $E_4 = \{(\pi(v), t): v \in \calB, t \in K_{v}\}$.
		\end{itemize}
		So, in the graph $G$, we have edges from $r$ to all vertices in $V_1 = \calA$, the graph $(V_1 \cup V_2, E_2)$ is just the graph $H$, and $(V_2 \cup V_3, E_3)$ is the matching between $\calB$ and its  copies. The only non-straightforward element is $E_4$: If the vertex $v \in \calB$ is incident to an edge with color $t \in K$ in $H$, then we have an edge from $\pi(v)$ to $t$ in $E_4$. Therefore, the set of $d'$ colors incident to a vertex $v \in \calB$ in $H$ is the same as the set of $d'$ out-neighbors of $\pi(v)$ in $G$. The in-degree of a terminal $t \in K$ is $s$. 
		
		The root of the DST instance is $r$, and the terminals are $V_4 = K$. To complete the DST instance, it remains to define the costs of edges in $G$: Edges in $E_1, E_2, E_3$ and $E_4$ have costs $\frac{|\calB|}{|\calA|}, 0, 1$ and $0$ respectively.  \medskip
		
	\begin{figure*}
		\centering
		\includegraphics[width=0.7\textwidth]{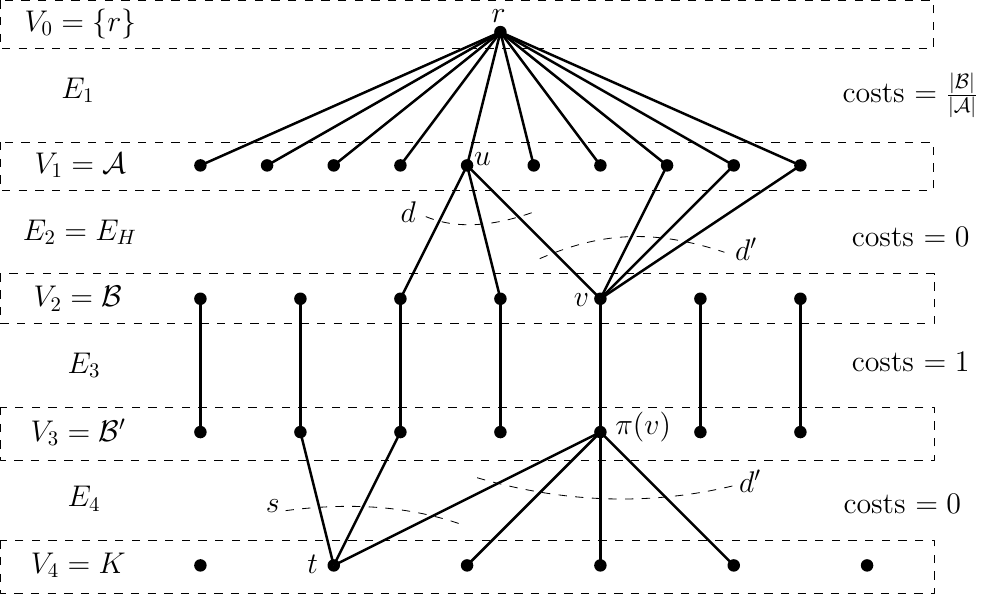}
		\caption{An illustration of the Zosin-Khuller type gap instance. All the edges go downwards. We only show some representative edges in $E_2$ and $E_4$. The set of colors of incoming edges of $v$ is precisely the set out-neighbors of $\pi(v)$.}
		\label{fig:layer}
	\end{figure*}
		
		Given the Zosin-Khuller type instance, we can naturally define an LP solution $x \in [0, 1]^E$, where every edge $e \in E$ has $x_e = \frac1s$.  The cost of the LP solution is $\frac1s\cdot\left(|E_1|\cdot\frac{|\calB|}{|\calA|} + |E_3|\cdot 1\right) = \frac1s\cdot\left(|\calA|\cdot\frac{|\calB|}{|\calA|} + |\calB|\cdot 1\right) = \frac{2|\calB|}{s}$. For every terminal $t \in K$, we can find $s$ disjoint paths from $r$ to $t$ in $G$: For each $(u, v) \in M_t$, we take the path $r\to u \to v \to \pi(v) \to t$.  Notice that the edge $(\pi(v), t)$ exists since $t \in K_v$. Therefore, the $x$ induces a valid solution to \eqref{FLP}. 
		
		The following lemma says that if the objects specified in \ref{objects-start} to \ref{objects-end} have a good property, then the DST instance has a large integrality gap. 
		\begin{lemma}
			\label{lemma:gap-property}
			Let $\alpha \geq 1$ be a real number. Suppose for every $u \in \calA$, there exists some $J_u \subseteq K$ of terminals with $|J_u| \leq \frac{d}{\alpha}$, such that for every $(u, v) \in E_H$, we have $|K_v \setminus J_u| \leq \frac{d'}{\alpha}$. Then the optimum solution to the DST instance has cost at least $\frac{\alpha|\calB|}{s}$.
		\end{lemma}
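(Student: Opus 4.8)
The plan is to express the cost of an arbitrary feasible solution as a purely combinatorial quantity and then bound it using the hypothesized property. Since the edges of $E_2$ and $E_4$ have zero cost, the cost of any subgraph $T$ of $G$ depends only on the edges of $E_1$ and $E_3$ it contains. Writing $A = \set{u \in \calA : (r,u) \in T}$ and $B = \set{v \in \calB : (v,\pi(v)) \in T}$, the cost of $T$ equals $|A|\cdot\frac{|\calB|}{|\calA|} + |B|$. Moreover, we may assume $T$ contains every zero-cost edge it could conceivably use, so the cost is genuinely a function of the pair $(A,B)$ alone.

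Next I would characterize feasibility. Because $G$ is layered and $E_3$ is the perfect matching $v\mapsto\pi(v)$, every $r\to t$ path has the form $r\to u\to v\to\pi(v)\to t$ with $(u,v)\in E_H$ and $t\in K_v$. Hence $T$ is feasible if and only if every terminal lies in $\union_{v\in B\cap N_H(A)}K_v$, where $N_H(A)$ denotes the set of $\calB$-neighbours of $A$ in $H$. A vertex $v\in B$ with no neighbour in $A$ can never lie on an $r\to t$ path, so deleting its edge $(v,\pi(v))$ keeps $T$ feasible and only decreases the cost; thus WLOG $B\subseteq N_H(A)$, and feasibility becomes the covering condition $\union_{v\in B}K_v = K$.

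The heart of the argument is then a short counting step. Set $J := \union_{u\in A}J_u$, so that $|J|\le |A|\cdot\frac{d}{\alpha}$ by the hypothesis on the sets $J_u$. Every $v\in B$ has some neighbour $u\in A$; since $J_u\subseteq J$ we have $K_v\setminus J\subseteq K_v\setminus J_u$, and the hypothesis applied to the edge $(u,v)\in E_H$ gives $|K_v\setminus J|\le|K_v\setminus J_u|\le\frac{d'}{\alpha}$. Because $\union_{v\in B}K_v = K$, we get $|K\setminus J| = \bigl|\union_{v\in B}(K_v\setminus J)\bigr|\le\sum_{v\in B}|K_v\setminus J|\le|B|\cdot\frac{d'}{\alpha}$, hence $k = |K|\le|J| + |K\setminus J|\le\frac{|A|d + |B|d'}{\alpha}$. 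Finally I would substitute the identities $d = \frac{sk}{|\calA|}$ and $d' = \frac{sk}{|\calB|}$ coming from $sk = d|\calA| = d'|\calB|$, which turns this into $\frac{\alpha}{s}\le\frac{|A|}{|\calA|} + \frac{|B|}{|\calB|}$; multiplying through by $|\calB|$ recovers exactly $|A|\cdot\frac{|\calB|}{|\calA|} + |B|\ge\frac{\alpha|\calB|}{s}$, i.e.\ the claimed lower bound on the cost of $T$.

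I do not anticipate a serious obstacle: the instance is engineered precisely so that feasibility collapses to a covering condition, and the choice $J = \union_{u\in A}J_u$ makes the two clauses of the hypothesis line up with the two edge classes $E_1$ and $E_3$ that carry cost. The only points needing care are the WLOG reduction $B\subseteq N_H(A)$ and the observation that a minimum-cost feasible solution can be taken to include all the free edges of $E_2$ and $E_4$, so that the cost depends only on $A$ and $B$; both are routine.
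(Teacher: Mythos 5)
Your proof is correct, and it is organized differently from the paper's. The paper decomposes the optimal tree into subtrees, one per $E_1$-edge, and shows each subtree rooted at $u \in \calA$ with $\calB$-vertex set $V'$ covers at most $\frac{d}{\alpha} + \frac{d'}{\alpha}|V'|$ terminals at cost $\frac{d}{d'} + |V'|$, so its \emph{density} is at most $\frac{d'}{\alpha}$; the bound $k/(d'/\alpha) = \alpha|\calB|/s$ follows. You instead work globally with an arbitrary feasible subgraph, reduce its cost to the pair $(A,B)$ of purchased $E_1$- and $E_3$-edges, turn feasibility into the covering condition $\union_{v \in B}K_v = K$ (after the harmless reduction $B \subseteq N_H(A)$), and count: with $J = \union_{u \in A}J_u$ you get $k \le \frac{|A|d + |B|d'}{\alpha}$, which after substituting $d = sk/|\calA|$, $d' = sk/|\calB|$ is exactly the claimed cost bound. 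The underlying use of the hypothesis and the final algebra coincide with the paper's (summing the paper's per-subtree bounds reproduces your inequality), but your version buys a little extra: it never needs the solution to be a tree or to be decomposed, and it makes transparent that the bound is really a fractional-covering inequality in the two variables $|A|, |B|$; the paper's density formulation is more local and matches the standard density-based intuition for DST. The only points requiring care in your write-up --- that every $r \to t$ path is of the form $r\to u\to v\to\pi(v)\to t$, and that vertices of $B$ without a neighbour in $A$ can be discarded --- are handled correctly, so there is no gap.
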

		\begin{proof}
			We can break the optimum directed Steiner tree $T^*$ of $G$ into many sub-trees, each containing exactly one edge in $E_1$. We show that any such sub-tree $T'$ has a small \emph{density}, which is defined as the number of terminals in $T'$ divided by the cost of the edges in $T'$.   In particular, we show its density is at most $\frac{d'}{\alpha}$.   If this holds, then the optimum tree has cost at least $k/\frac{d'}\alpha = \alpha \cdot \frac{k}{d'} = \alpha \cdot \frac{|\calB|}{s}$ as $ks = d'|\calB| = |E_H|$. 
			
			So, we fix a sub-tree $T'$ that contains exactly one edge in $E_1$, and it remains to show that the density of $T'$ is at most $\frac{d'}{\alpha}$. Let $u$ be the unique vertex in $V_1 = \calA$ in the tree.  Let $V'$ be the set of vertices in $V_2$ in the tree; without loss of generality we assume that, for every $v \in V'$, we have $(v, \pi(v)) \in T'$ since, otherwise, we can remove $(u, v)$ from the solution. So, each vertex in $V'$ is a neighbor of $u$ in $H$.  The set of terminals that can be reached from $V'$ is $\union_{v \in V'}K_v$. Therefore, the number of terminals in the tree $T'$ is at most 
			\begin{align*}
				\left|\union_{v \in V'}K_v\right|  = \left|J_u \cup \union_{v \in V'}(K_v \setminus J_u)\right| \leq \frac{d}{\alpha} + \frac{d'}{\alpha}|V'|.
			\end{align*}
			
			The cost of $T'$ is exactly $\frac{|\calB|}{|\calA|} + |V'| = \frac{d}{d'} + |V'|$, and the density of the tree is at most 
				$\displaystyle \frac{\frac{d}{\alpha} + \frac{d'}{\alpha}|V'|}{\frac{d}{d'} + |V'|} = \frac{d'}{\alpha}$.
		\end{proof}

		Therefore, if the condition in Lemma~\ref{lemma:gap-property} holds, then the DST instance has integrality gap at least $\alpha/2$. 
		
		In the Zosin-Khuller gap instance in \cite{ZosinK02}, the objects from \ref{objects-start} to \ref{objects-end} are defined as follows. Let $K$ be the set of $k$ terminals, and assume $\sqrt{k}$ is an integer. We have $\calA = {K \choose \sqrt{k}}$, $\calB = {K\choose \sqrt{k} + 1}$, and there is an edge from $A \in \calA$ to $B \in \calB$ in $H$ if and only if $A \subseteq B$. The color of $(A, B)$ is the unique terminal in $B \setminus A$.  Notice that all the required properties are satisfied. $d = k -\sqrt{k}, d' = \sqrt{k}+1$, and for a vertex $B \in \calB$, we have $K_B = B$.
		
		To satisfy the condition in Lemma~\ref{lemma:gap-property}, we define $J_A =  A$ for every $A \in \calA$. So, $|J_A| = \sqrt{k}$. Then, for every $B \in \calB$ that is adjacent to $A$, we have $K_B \setminus J_A = B \setminus  A$, which has size 1. Therefore, we can define $\alpha = \min\{\frac{d}{\sqrt{k}}, \frac{d'}{1}\} = \min\{\sqrt{k}-1, \sqrt{k}+1\} = \sqrt{k}-1$ to make the condition of Lemma~\ref{lemma:gap-property} holds.  The instance gives a gap of $\Omega(\sqrt{k})$. However, it has size exponential in $\sqrt{k}$. 

	\section{Instance with Polynomial Integrality Gap} 
	\label{sec:proof}
	The construction of our gap instance is similar to that of \cite{ZosinK02} in the sense that our $\calA$ and $\calB$ correspond to subsets of a ground set.  However, in our construction, a terminal also corresponds to a subset of the ground set (as opposed to a single element). In the graph $H$, we have an edge from some element $A \in \calA$ to some element $B \in \calB$ if and only if $A \subseteq B$, and the color of the edge is the set $B \setminus A$, which will be an element in $K$.  Using this construction, we make the number of terminals exponential in the size of the ground set. This can lead to a polynomial integrality gap. We carefully design the sizes of the subsets so that the conditions in Lemma~\ref{lemma:gap-property} hold with a large $\alpha$. 
	
	We formally define the objects specified in \ref{objects-start} to \ref{objects-end}.   Let $\rho \in (0, \frac14)$ and $\theta \in \left(\rho^2, \frac\rho 2\right)$ be two absolute rational numbers whose values will be decided later.  Let $m > 0$ be an integer so that $\rho m$ and $\theta m$ are integers. $[m]$ will be the ground set. The objects are defined as follows:
	\begin{itemize}
		\item $\calA = K=  {[m] \choose \rho m}, \calB = {[m] \choose 2\rho m}$, there is an edge from some $A \in \calA$ to some $B \in \calB$ in $H$ if and only if $A \subseteq B$. The edge has color $B \setminus A \in K$. 
		\item Therefore, $d = {(1-\rho)m \choose \rho m}, d' = {2\rho m \choose \rho m}, k = {m \choose \rho m}$ and $K_B = \set{C \subseteq B: |C| = \rho m}$ for every $B \in \calB$.
	\end{itemize}
	It is easy to check that all the required properties are satisfied: $|\calA| \leq |\calB|$, the graph $H$ is bi-regular  and all matchings have the same size, as $H$ is highly symmetric. \medskip
	
	Now we shall define the set $J_A$ for any $A \in \calA$ to satisfy the condition of Lemma~\ref{lemma:gap-property}: 
	\begin{align*}
		J_A = \{C  \in K: |C \cap A| \geq \theta m\}.
	\end{align*}

	%We then show that this definition satisfies the condition of Lemma~\ref{lemma:gap-property} for some $\alpha = e^{\Omega(m)}$.  From now on, we fix any $A \in \calA = {[m] \choose \rho m}$. 
	
	From now on we use $\log(\cdot)$ to denote $\log_2(\cdot)$, and $h(a):= -a\log a -(1-a)\log(1-a)$ for every $a \in [0, 1]$ to denote the entropy of the Bernoulli distribution with mean $a$. 	
	\begin{lemma}
		\label{lemma:stirling}
		For every two integers $0 \leq q \leq p$, we have 
		\begin{align*}
			\log {p \choose q} = h\left(\frac{q}{p}\right) \cdot p \pm O(\log p).
		\end{align*}
	\end{lemma}
	\begin{proof}
		Using Stirling's approximation, we have $\log_2(p!) = p \log_2 p - p\log_2e \pm O(\log p)$ for integers $p \geq 1$. 
		So, 
		\begin{align*}
			\log {p \choose q} = \log  \frac{p!}{q!(p-q)!} &= p \log  p - q \log  q - (p-q)\log  (p-q) \pm O(\log p)\\
			&= p\left(\frac{q}{p} \log \frac {p}{q} + \frac{p-q}{p}\log \frac{p}{p-q}\right) \pm O(\log p)\\
			&= h\left(\frac {q}{p}\right) \cdot p \pm O(\log p). \qedhere
		\end{align*}
	\end{proof}

	\begin{coro}
		\label{coro:bound-quantities}
		The following equalities hold:
		\begin{align*}
			\log |\calB| &= h(2\rho) \cdot m \pm O(\log m).\\
			\log d &= h\left(\frac{\rho}{1-\rho}\right)\cdot (1-\rho) m \pm O(\log m).\\
			\log d' &= 2\rho m \pm O(\log m).
		\end{align*}
	\end{coro}

	\begin{proof}
		Notice that  $|\calB| = {m \choose 2\rho m}$,  $d = {(1-\rho)m \choose \rho m}$ and $d' = {2\rho m \choose \rho m}$. Then applying Lemma~\ref{lemma:stirling} and using that $h(0.5)=1$ proves the corollary. 
	\end{proof}

	\begin{lemma}
		\label{lemma:bi-monotone}
		Let $q \leq p_1 \leq p_2$ be positive integers. Consider the function
		${p_1 \choose a}{p_2\choose q-a}$ over integers $a \in [0, q]$.  Then the function is increasing over $a \in \big[0, \frac{qp_1}{p_1 + p_2}\big]$ and decreasing over $a \in \big[\frac{qp_1}{p_1+p_2}, q\big]$.
	\end{lemma}	
	\begin{proof}
		For notation convenience, we define $f(a) = {p_1 \choose a}{p_2\choose q-a}$ for every $a \in [0, q]$. Then, fix $a \in [0, q)$ and we have 
		\begin{align*}
			\frac{f(a+1)}{f(a)}  = \frac{p_1 - a}{a+1} \times \frac{q-a}{p_2+1-(q-a)}.
		\end{align*}
		Notice that the right-side is a decreasing function of $a$.   For the quantity to be 1,  we need $\displaystyle \frac{p_1-a}{a+1} = \frac{p_2+1-q+a}{q-a}$, which is equivalent to $\frac{p_1+1}{a+1} = \frac{p_2+1}{q-a}$, and $(p_1+1)(q-a) = (p_2+1)(a+1)$. So,  we need $a$ to be $\frac{q(p_1+1)-(p_2+1)}{p_1+p_2+2} = \frac{(q+1)(p_1+1)}{p_1+p_2+2} - 1$. This is strictly between $\frac{qp_1}{p_1+p_2}-1$ and $\frac{qp_1}{p_1+p_2}$.  To see the lower bound, notice that $\frac{p_1+1}{p_1+p_2+2} \geq \frac{p_1}{p_1+p_2}$ as $0 < p_1 \leq p_2$.  To see the upper bound, note that $\frac{q(p_1+1)-(p_2+1)}{p_1+p_2+2} < \frac{qp_1}{p_1+p_2+2} < \frac{qp_1}{p_1+p_2}$. 
		
		So, when $a \leq \frac{qp_1}{p_1+p_2}-1$, we have $f(a+1) > f(a)$. When $a \geq \frac{qp_1}{p_1+p_2}$, we have $f(a+1) < f(a)$. 
	\end{proof}
	
	\begin{lemma} 
		\label{lemma:bound-log-JA-KB}
		For any $A \in \calA$ and any $B \in \calB$ with $(A, B) \in E_H$, we have
		\begin{align*}
			\log |J_A| &\leq  \left(h\left(\frac{\theta}{\rho}\right)\cdot \rho  + h\left(\frac{\rho - \theta}{1-\rho}\right)\cdot (1-\rho)\right) m + O(\log m).\\
			\log |K_B \setminus J_A| &\leq h\left(\frac\theta\rho\right)\cdot 2\rho m + O(\log m).
		\end{align*}
	\end{lemma}
	\begin{proof}
		Notice that $\displaystyle |J_A| = |\{C  \in K: |C \cap A| \geq \theta m\}| = \sum_{a = \theta m}^{\rho m}{\rho m \choose a}{ (1-\rho)m \choose \rho m - a}$.
		We have $\theta m \geq \frac{\rho m\cdot \rho m}{m} = \rho^2 m$ by our assumption. By Lemma~\ref{lemma:bi-monotone}, the maximum of ${\rho m \choose a}{ (1-\rho)m \choose \rho m - a}$ over $a \in [\theta m, \rho m]$ is achieved when $a = \theta m$.  So, 
		\begin{align*}
			\log |J_A| &\leq \log \left({\rho m \choose \theta m}{ (1-\rho)m \choose \rho m - \theta m}\right) + O(\log m) \leq h\left(\frac\theta\rho\right)\cdot \rho m + h\left(\frac{\rho - \theta}{1-\rho}\right)\cdot (1-\rho) m + O(\log m),
		\end{align*}
		by Lemma~\ref{lemma:stirling}.
	
		Similarly,  we have $\displaystyle |K_B \setminus J_A| = |\{ C \in K: C \subseteq B, |C \cap A| < \theta m \}| = \sum_{a = 0}^{\theta m - 1}{\rho m \choose a}{\rho m\choose \rho m - a}$. We have $\theta m \leq \frac{\rho m \cdot \rho m}{2\rho m} = \frac{\rho m}{2}$ by our assumption.  Hence, the maximum of ${\rho m \choose a}{\rho m\choose \rho m - a}$ over $a \in [0, \theta m)$ is at most ${\rho m \choose \theta m}{\rho m\choose \rho m - \theta m}$ by Lemma~\ref{lemma:bi-monotone}. Therefore, 
		\begin{align*}
			\log |K_B\setminus J_A| &\leq \log \left({\rho m \choose \theta m}{ \rho m \choose \rho m - \theta m}\right) + O(\log m) \leq h\left(\frac\theta\rho\right)\cdot \rho m + h\left(\frac{\rho - \theta}{\rho}\right)\cdot \rho m + O(\log m)\\
			&= h\left(\frac\theta\rho\right)\cdot 2\rho m + O(\log m)
		\end{align*}
		by Lemma~\ref{lemma:stirling} and that $h\left(\frac{\theta}{\rho}\right) = h\left(\frac{\rho - \theta}{\rho}\right)$.
		This finishes the proof of the lemma. 
	\end{proof}

	Therefore, to satisfy the condition in Lemma~\ref{lemma:gap-property}, we can define $\alpha$ as follows. In the definition, $(A, B)$ is any edge in $E_H$. Due to the symmetry, $|J_A|$ and $|K_B \setminus J_A|$ do not depend on the edge $(A, B)$.
	\begin{align*}
		\log \alpha := \min \left\{\begin{matrix}\log \frac{d}{|J_A|}\\  \log \frac{d'}{|K_B \setminus J_A|}\end{matrix}\right\} \geq
		\left\{\begin{matrix}
			h\left(\frac{\rho}{1-\rho}\right)\cdot (1-\rho)  - h\left(\frac{\theta}{\rho}\right)\cdot \rho  - h\left(\frac{\rho - \theta}{1-\rho}\right)\cdot (1-\rho) \\
			\left(1 - h\left(\frac\theta\rho\right)\right)\cdot 2\rho 
		\end{matrix}\right\}\cdot m - O(\log m).
	\end{align*}
	The inequality is by Corollary~\ref{coro:bound-quantities} and Lemma~\ref{lemma:bound-log-JA-KB}.

	With the assistance of computer programs, we set $\rho$ and $\theta$ as follows: Let $\rho = 0.007454$ and $\theta = 0.00136256$, which satisfy $\rho^2 \leq \theta \leq \frac{\rho}{2}$ and $\rho \leq \frac14$. Notice that $\log n = \log |\calB| + O(1) \leq h(2\rho)m + O(\log m)$ by Corollary~\ref{coro:bound-quantities}.  Therefore, if we define $\delta^\circ$ as follows:
	\begin{align*}
		\delta^\circ := \frac{
			\min\left\{h\left(\frac{\rho}{1-\rho}\right)\cdot (1-\rho)  - h\left(\frac{\theta}{\rho}\right)\cdot \rho  - h\left(\frac{\rho - \theta}{1-\rho}\right)\cdot (1-\rho), \quad \left(1 - h\left(\frac\theta\rho\right)\right)\cdot 2\rho\right\}
		}{h(2\rho)} > 0.0418,
	\end{align*}
	then we have $\frac{\log \alpha}{\log n} \geq \delta^\circ - O\left(\frac{\log m}m\right) = \delta^\circ - O\left(\frac{\log \log n}{\log n}\right)$. This implies $\alpha \geq n^{\delta^\circ} / (\log n)^{O(1)}$. As $\delta^\circ > 0.0418$, we have $\alpha = \Omega(n^{0.0418})$. By Lemma~\ref{lemma:gap-property}, the integrality gap of the flow LP \eqref{FLP} for our instance is $\Omega(n^{0.0418})$; this finishes the proof of Theorem~\ref{thm:main}. 
	
	\paragraph{Remark} One may notice that we could use different sizes for subsets in $\calA$ and in $K$.  However, with the assistance of computer programs, we see that the best constant is obtained when the sizes are the same. So, we let both of them be $\rho m$ in our instance. 
	
\section{Future Directions}
\label{sec:conclusion}
Our result rules out a polynomial-time poly-logarithmic approximation for DST using the flow LP relaxation.  Our instance is on a 5-level graph, and thus it admits a poly-logarithmic approximation via recursive greedy \cite{CharikarCGG98} or LP hierarchies \cite{Rothvoss11, FriggstadKKLST14}.  One interesting future direction is to understand the power of  $O(1)$-level lift of the flow LP using some hierarchy.  That is, whether the lifted LP still has a polynomial integrality gap.  To show a lower bound, we may need to lift our gap instance for the basic flow LP in some manner. 

\paragraph*{Acknowledgment} 

We thank Jittat Fakcharoenphol for the useful discussions.
	
\bibliography{dst.bib}
\bibliographystyle{alpha}

%\newpage

\appendix

\end{document}